\newtheorem{theorem}{\bf Theorem}
\begin{document}
\setcounter{page}{1}
\title{Multiple Access Outerbounds and the Inseparability of Parallel Interference Channels}
\author{\authorblockN{Viveck R. Cadambe, Syed A. Jafar}
\authorblockA{Electrical Engineering and Computer Science\\
University of California Irvine, \\
Irvine, California, 92697, USA\\
Email: {vcadambe@uci.edu, syed@uci.edu}\\ \vspace{-1cm}}}

\maketitle
\vspace{12pt}
\begin{abstract} 
It is known that the capacity of parallel (multi-carrier) Gaussian point-to-point, multiple access and broadcast channels can be achieved by separate encoding for each subchannel (carrier) subject to a power allocation across carriers. In this paper we show that such a separation does not apply to parallel Gaussian interference channels in general. A counter-example is provided in the form of a $3$ user interference channel where separate encoding can only achieve a sum capacity of $\log(\mbox{SNR})+o(\log(\mbox{SNR}))$ per carrier while the actual capacity, achieved only by joint-encoding across carriers, is $3/2\log(\mbox{SNR}))+o(\log(\mbox{SNR}))$ per carrier. As a byproduct of our analysis, we propose a class of multiple-access-outerbounds on the capacity of the $3$ user interference channel.
\end{abstract}

\section{Introduction}
The study of parallel Gaussian channels is motivated by the frequency-selective or time-varying nature of the wireless channel. With multi-carrier modulation, (assuming no inter-carrier interference (ICI)) a frequency selective channel can be viewed as a set of parallel channels with channel coefficients that vary from one carrier to another but may be assumed constant (flat-fading) over each carrier. Similarly, if inter-symbol-interference (ISI) is absent, the time-varying channel gives rise to parallel channels whose values are fixed during each symbol but vary from one symbol to another. In this paper, we will use the terminology of frequency-selective channels and multi-carrier modulation to refer to parallel Gaussian channels. It is understood that the model is equally applicable to the time-varying channel as well. 

It is well known that over the parallel Gaussian point-to-point channel, coding separately over the individual subchannels (carriers)  achieves the capacity subject to optimal power allocation. Thus the capacity of the parallel Gaussian point-to-point channel is equal to the sum of the capacities of the point-to-point Gaussian subchannels with corresponding powers chosen through the water-filling algorithm. Similarly, it has also been shown that separate coding over each carrier is optimal for parallel Gaussian multiple access (MAC) and broadcast (BC) channels \cite{dtse:mac,tse:parallelbc}. The separability of parallel Gaussian point-to-point, MAC and BC is useful because it provides a direct connection between the single-carrier channel models studied extensively in classical information theory and the frequency-selective (or time varying) channels that may be more relevant in practice. Coding schemes designed for the classical (single carrier) models can be applied directly to multi-carrier systems subject to a power allocation across carriers. A key question that remains open is whether such a separation holds for other Gaussian networks, and in particular, if separate encoding is optimal for multi-carrier interference networks.

Much work on multi-carrier interference networks (e.g. in the context of DSL \cite{chung_cioffi:parallel_strongint, scutari_palomar_etal:wfillint,leung_etal:wfillint, pang_scutari_etal:parallel_int, yu_cioffi:parallel_int, yu_ginis_cioffi:parallel_int,Cendrillon_etal, Yu_Lui,taek_chung:thesis}) has focused on optimal power allocation across carriers under the assumption of \emph{separate coding} over each carrier. For the two user parallel interference channel with strong interference it is shown in \cite{taek_chung:thesis} that indeed the sum capacity is the sum of the rates that can be achieved by separately encoding over each carrier subject to an overall power optimization. For the case where more than $2$ users are present or when the channels are not restricted to the strong interference case, since the capacity of even the single-carrier interference channel is not known, usually the rate optimization is carried out under the practically motivated assumption that all interference is to be treated as noise. Both centralized and distributed algorithms, some of which are based on game-theoretic formulations, have been proposed for this ``dynamic spectrum management'' problem and the optimality and convergence properties of these algorithms have been established under the separate encoding assumption. 

Joint encoding of multiple-carriers has been used recently in \cite{cadambe_jafar:Kuserint} to characterize the sum capacity per carrier, of the $K$ user multi-carrier Gaussian interference channel. The sum capacity (per carrier) is found to be  $$ C(\mbox{SNR}) = \frac{K}{2}\log(\mbox{SNR}) + o(\log(\mbox{SNR})),$$ 
where SNR represents the signal to noise power ratio. In other words, the $K$ user interference channel has $K/2$ degrees of freedom\footnote{Also known as multiplexing-gain  (See \cite{tse_zheng:divmultiplexing}) or capacity pre-log.} per orthogonal time and frequency dimension. The key to the capacity characterization is the idea of interference alignment (see \cite{jafar_shamai:dofx} and the references therein) - a construction of signals such that they cast overlapping shadows at the receivers where they constitute interference while they remain distinghishable at the receivers where they are desired. The interference alignment constructions proposed in \cite{cadambe_jafar:Kuserint} are based on \emph{joint encoding} over multiple frequencies. Due to interference alignment, the joint encoding scheme of \cite{cadambe_jafar:Kuserint} outperforms the dynamic spectrum management schemes of \cite{ scutari_palomar_etal:wfillint,leung_etal:wfillint, pang_scutari_etal:parallel_int, yu_cioffi:parallel_int, yu_ginis_cioffi:parallel_int} in terms of degrees of freedom \footnote{Interestingly, in both cases interference is treated as noise, so no multiuser detection is involved.}. However, it has not been shown that this joint encoding is \emph{necessary} to achieve capacity. Interestingly, another recent work in \cite{cadambe_jafar_shamai:intconst} has provided examples where interference alignment is achieved over a single-carrier interference channel, i.e., with separate encoding. Thus, it remains unclear whether the capacity of multi-carrier interference channels can be achieved by separate encoding over each carrier and a power allocation across carriers. It is this open problem that we address in this paper.

The main result of this paper is that unlike the point-to-point, multiple-access and broadcast channels, in general separate coding \emph{does not suffice} to achieve the capacity of the interference channel. We establish this result by constructing a counterexample - a $3$-user frequency-selective interference channel where separate coding can only achieve a sum rate of  $\log(\mbox{SNR})+o(\log(\mbox{SNR}))$ per carrier while the capacity is shown to be   $3/2\log(\mbox{SNR})+o(\log(\mbox{SNR}))$ per carrier. Thus, parallel interference channels are, in general, inseparable. 

As a byproduct of our analysis we also propose a class of outerbounds on the capacity of the $3$ user interference channel. These outerbounds share the property that one receiver (possibly aided by a genie and/or noise reduction) is able to decode all messages - so that the multiple-access channel capacity to the genie-aided receiver becomes an outerbound on the sum capacity of the $3$ user interference channel. The MAC outerbounds can be viewed as a generalization of Carliel's outerbound \cite{carliel:interference} on the $2$ user interference channel to the case of more than $2$ users. These outerbounds play an important role in identifying singularity conditions for interference channels that do not achieve the $K/2$ degrees of freedom. However, the bounds are generally loose in the degrees of freedom sense and tighter bounds at high SNR may be obtained by an application of Carlieal's outerbound on each of the $2$ user channels contained within the $K$ user interference channel.

We start with the classical (single-carrier) Gaussian $3$ user interference channel.

\section{The Gaussian $3$ User Interference channel}
\begin{figure}[!tbp]
\begin{center}\setlength{\unitlength}{0.00041667in}
\begingroup\makeatletter\ifx\SetFigFont\undefined%
\gdef\SetFigFont#1#2#3#4#5{%
  \reset@font\fontsize{#1}{#2pt}%
  \fontfamily{#3}\fontseries{#4}\fontshape{#5}%
  \selectfont}%
\fi\endgroup%
{\renewcommand{\dashlinestretch}{30}
\begin{picture}(4725,2631)(0,-10)
\put(1575,2533){\ellipse{150}{150}}
\put(1575,1333){\ellipse{150}{150}}
\put(1575,133){\ellipse{150}{150}}
\put(3675,2533){\ellipse{150}{150}}
\put(3675,1333){\ellipse{150}{150}}
\put(3675,133){\ellipse{150}{150}}
\path(4125,2533)(4650,2533)
\path(4530.000,2503.000)(4650.000,2533.000)(4530.000,2563.000)
\path(4125,1333)(4650,1333)
\path(4530.000,1303.000)(4650.000,1333.000)(4530.000,1363.000)
\path(4125,133)(4650,133)
\path(4530.000,103.000)(4650.000,133.000)(4530.000,163.000)
\put(4725,2458){\makebox(0,0)[lb]{{\SetFigFont{6}{7.2}{\rmdefault}{\mddefault}{\updefault}$\hat{W}_1$}}}
\put(4725,1258){\makebox(0,0)[lb]{{\SetFigFont{6}{7.2}{\rmdefault}{\mddefault}{\updefault}$\hat{W}_2$}}}
\put(4725,58){\makebox(0,0)[lb]{{\SetFigFont{6}{7.2}{\rmdefault}{\mddefault}{\updefault}$\hat{W}_3$}}}
\path(1650,2533)(3600,2533)
\path(3480.000,2503.000)(3600.000,2533.000)(3480.000,2563.000)
\path(1650,1333)(3600,2458)
\path(3511.049,2372.048)(3600.000,2458.000)(3481.066,2424.019)
\path(1650,133)(3675,2458)
\path(3618.809,2347.807)(3675.000,2458.000)(3573.564,2387.214)
\path(1650,2533)(3600,1408)
\path(3481.066,1441.981)(3600.000,1408.000)(3511.049,1493.952)
\path(1650,1333)(3600,1333)
\path(3480.000,1303.000)(3600.000,1333.000)(3480.000,1363.000)
\path(1650,1333)(3600,208)
\path(3481.066,241.981)(3600.000,208.000)(3511.049,293.952)
\path(1650,133)(3600,133)
\path(3480.000,103.000)(3600.000,133.000)(3480.000,163.000)
\path(1650,133)(3600,1258)
\path(3511.049,1172.048)(3600.000,1258.000)(3481.066,1224.019)
\path(1650,2533)(3675,208)
\path(3573.564,278.786)(3675.000,208.000)(3618.809,318.193)
\path(525,2533)(1050,2533)
\path(930.000,2503.000)(1050.000,2533.000)(930.000,2563.000)
\path(525,1333)(1050,1333)
\path(930.000,1303.000)(1050.000,1333.000)(930.000,1363.000)
\path(450,133)(975,133)
\path(855.000,103.000)(975.000,133.000)(855.000,163.000)
\put(1125,2458){\makebox(0,0)[lb]{{\SetFigFont{6}{7.2}{\rmdefault}{\mddefault}{\updefault}$X_1$}}}
\put(1125,1258){\makebox(0,0)[lb]{{\SetFigFont{6}{7.2}{\rmdefault}{\mddefault}{\updefault}$X_2$}}}
\put(1125,58){\makebox(0,0)[lb]{{\SetFigFont{6}{7.2}{\rmdefault}{\mddefault}{\updefault}$X_3$}}}
\put(3825,2458){\makebox(0,0)[lb]{{\SetFigFont{6}{7.2}{\rmdefault}{\mddefault}{\updefault}$Y_1$}}}
\put(3825,1258){\makebox(0,0)[lb]{{\SetFigFont{6}{7.2}{\rmdefault}{\mddefault}{\updefault}$Y_2$}}}
\put(3825,58){\makebox(0,0)[lb]{{\SetFigFont{6}{7.2}{\rmdefault}{\mddefault}{\updefault}$Y_3$}}}
\put(0,2458){\makebox(0,0)[lb]{{\SetFigFont{6}{7.2}{\rmdefault}{\mddefault}{\updefault}$W_1$}}}
\put(0,1258){\makebox(0,0)[lb]{{\SetFigFont{6}{7.2}{\rmdefault}{\mddefault}{\updefault}$W_2$}}}
\put(0,58){\makebox(0,0)[lb]{{\SetFigFont{6}{7.2}{\rmdefault}{\mddefault}{\updefault}$W_3$}}}
\end{picture}
}\end{center}
\caption{The $3$ user interference channel}
\label{fig:3user}
\end{figure}

We study the $3$ user (single-carrier) Gaussian interference channel whose input-output relations are described as follows 
$$Y_{i}(n) = \sum_{j=1}^{3} h_{i,j} X_{j}(n) + Z_{i}(n), i=1,2,3 $$
where at the $n$th symbol, $Y_i(n)$ and $Z_i(n)$ respectively represent the received signal and the noise at the $i$th receiver, and $X_j(n)$ represents the signal transmitted by the $j$th transmitter. $h_{i,j}$ represents the channel gain between transmitter $j$ and receiver $i$. All channel gains are assumed to be non-zero and known to all the nodes in the network. Transmitter $i$ has message $W_i$ for receiver $i$ for $i=1,2,3$. The noise $Z_i(n)$ is a zero-mean additive white Gaussian noise (AWGN), assumed independent identically distributed (i.i.d.) across users and symbols. With the noise power at each receiver normalized to unity, the total transmit power can be expressed as $E\left[ \displaystyle \frac{1}{N}\sum_{i=1}^3\sum_{n=1}^{N} |X_i(n)|^2 \right] \leq \mbox{SNR}$, where $N$ is the length of the codeword. The rate of the $i$th user is defined as $R_i(\mbox{SNR}) = \frac{\log(|W_i|)}{N}$ where $|W_i|$ is the cardinality of the message set corresponding to message $W_{i}$. A rate vector $\mathbf{R}(\mbox{SNR})=(R_1(\mbox{SNR}), R_2(\mbox{SNR}), R_3(\mbox{SNR}))$ is said to be \emph{achievable} if messages $W_i,i=1,2,3,$ can be encoded at rates $R_i(\mbox{SNR}), i=1,2,3$ so that the probability of decoding error can be made arbitrarily small by choosing an appropriately large $N$. The capacity region $C(\mbox{SNR})$ represents the set of all achievable rate vectors in the network. The sum capacity $C_{\Sigma}(\mbox{SNR})$ of the network is defined as 
$$ C_{\Sigma}(\mbox{SNR}) = \max_{\mathbf{R}(\textrm{SNR}) \in C(\textrm{SNR})} \sum_{i=1}^{3} R_i(\mbox{SNR})$$
	The number of degrees of freedom of the network is defined as
$$ d_{\Sigma} = \lim_{\textrm{SNR} \to \infty} \frac{C_{\Sigma}(\mbox{SNR})}{\log(\mbox{SNR})}$$
Equivalently, $d_{\Sigma}$ is the total number of degrees of freedom of the network if and only if we can write
$$ C_{\Sigma}(\mbox{SNR}) = d_{\Sigma} \log(\mbox{SNR}) + o(\log(\mbox{SNR})).$$

\begin{theorem}
\label{thm:main}
Consider the $3$ user interference channel where 
$$\frac{h_{i,j}}{h_{i,i}} = \frac{h_{k,j}}{h_{k,i}} $$ 
for some $i,j,k \in \{1,2,3\}, j \neq k, k \neq i, i \neq j$.
Then, this interference channel has $1$ degree of freedom, or equivalently, the sum capacity of the interference channel may be expressed as 
$$ C_{\Sigma}(\mbox{SNR}) = \log(\mbox{SNR}) + o(\log(\mbox{SNR}))$$
\end{theorem}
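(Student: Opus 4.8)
The plan is to prove the two matching bounds $C_{\Sigma}(\mbox{SNR}) \geq \log(\mbox{SNR}) + o(\log(\mbox{SNR}))$ and $C_{\Sigma}(\mbox{SNR}) \leq \log(\mbox{SNR}) + o(\log(\mbox{SNR}))$ separately. First I would relabel the users so that the hypothesis reads $h_{1,2}/h_{1,1} = h_{3,2}/h_{3,1} =: \alpha$, which is legitimate since the condition only asserts the existence of one such triple. The crucial structural consequence I would extract is that, writing $S(n) = X_1(n) + \alpha X_2(n)$, the contributions of transmitters $1$ and $2$ at receivers $1$ and $3$ collapse onto this single combination: $h_{1,1}X_1 + h_{1,2}X_2 = h_{1,1}S$ and $h_{3,1}X_1 + h_{3,2}X_2 = h_{3,1}S$. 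Hence both $Y_1$ and $Y_3$ depend on $(X_1, X_2)$ only through the one-dimensional signal $S$. The lower bound is immediate: silencing transmitters $2$ and $3$ leaves a point-to-point Gaussian link of capacity $\log(1 + |h_{1,1}|^2\mbox{SNR}) = \log(\mbox{SNR}) + o(\log(\mbox{SNR}))$, so one degree of freedom is achievable.

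The main work is the converse, for which I would use a multiple-access outerbound of the kind advertised in the introduction: I would exhibit a single (genie-aided, noise-reduced) receiver that can decode all three messages $W_1, W_2, W_3$, so that the sum rate is bounded by the capacity of a multiple-access channel into that one receiver. Because such a receiver observes only a scalar per channel use with received power $O(\mbox{SNR})$, its multiple-access sum capacity is $\log(\mbox{SNR}) + o(\log(\mbox{SNR}))$, i.e. one degree of freedom, which is exactly the target. Concretely I would take receiver $1$ (or $3$) as the decoding receiver and argue, through Fano's inequality, that after it recovers its own message it can also recover $W_2$ and $W_3$ once supplied with a carefully constructed side-information signal; the alignment is what lets this signal be chosen so that $(X_1, X_2)$ never needs to be resolved beyond the single dimension $S$.

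The step I expect to be the crux is controlling the degrees of freedom carried by the genie. A naive hand-off of a second receiver's entire output supplies a full extra signal dimension and degrades the bound to the generic value $3/2$ (or worse) that the three pairwise Carleial-type bounds already give; to reach $1$ the side information must carry only $o(\log(\mbox{SNR}))$ bits, i.e. zero degrees of freedom. This is precisely where the hypothesis is indispensable: since receivers $1$ and $3$ share the same one-dimensional image $S$ of $(X_1, X_2)$, the combination $h_{3,1}Y_1 - h_{1,1}Y_3$ annihilates $S$ and isolates $X_3$ up to a bounded-variance noise, so the information the genie must supply to bridge one receiver to the others can be reduced to bounded-entropy (interference-plus-noise) quantities rather than a fresh signal dimension. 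I would therefore adopt a generalized-Carleial genie of interference-plus-noise form and verify, by direct entropy bookkeeping, that the effective channel seen by the genie-aided receiver has the degrees of freedom of a single scalar. Making this bookkeeping tight---ensuring no degree of freedom leaks through the genie while still guaranteeing decodability of all three messages---is the delicate part, and it is exactly the point at which a generic, non-aligned channel would fail to collapse below $3/2$.
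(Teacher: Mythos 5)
Your architecture (trivial achievability, plus a converse in which a single noise-reduced receiver decodes all three messages and is then bounded by a scalar MAC) is sound, and it is essentially the paper's idea; but the concrete mechanism you propose for your self-identified crux fails. The signal $h_{3,1}Y_1 - h_{1,1}Y_3 = (h_{3,1}h_{1,3}-h_{1,1}h_{3,3})X_3 + h_{3,1}Z_1 - h_{1,1}Z_3$ has bounded-variance \emph{noise}, but it is not a bounded-entropy quantity: its signal component $X_3$ carries power $\Theta(\mbox{SNR})$, so as genie side information it is a fresh, essentially interference-free look at $X_3$, i.e.\ a full extra receive dimension. Handing it to the bounding receiver creates a two-antenna MAC whose sum capacity is $2\log(\mbox{SNR})+o(\log(\mbox{SNR}))$, so the bound degrades to $2$ degrees of freedom --- looser even than the pairwise Carleial bounds you were trying to beat; the ``entropy bookkeeping'' you defer to would expose exactly this leak. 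Relatedly, your hedge ``receiver $1$ (or $3$)'' hides a wrong branch: in your labeling the receiver that decodes its own message and then $W_2$ and $W_3$ (i.e.\ receiver $1$) cannot work, because after cancelling $X_1$ its residual is the projection $h_{1,2}X_2+h_{1,3}X_3+Z_1$, which for generic coefficients is not a degraded version of what receiver $2$ or receiver $3$ sees given $X_1$; unsticking it requires precisely the dimension-adding genie above.

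The resolution --- and this is what the paper does --- is that the bounding receiver needs \emph{no} genie at all; what matters is picking the receiver at which the two \emph{interfering} transmitters align, and decoding sequentially. In your labeling ($h_{1,2}/h_{1,1}=h_{3,2}/h_{3,1}=\alpha$, $S=X_1+\alpha X_2$) this is receiver $3$: it decodes $W_3$, cancels $X_3$, and its residual $h_{3,1}S+Z_3$ is, after a constant (SNR-independent) noise reduction, able to statistically simulate $Y_1$ given $X_3$ (since $Y_1$ is a function of a noisy $S$ and of $X_3$), so it decodes $W_1$; then, knowing $X_1$ and $X_3$, its residual is a noisy $X_2$ that simulates $Y_2$ given $(X_1,X_3)$, so it decodes $W_2$. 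Genie information appears in the paper only in a form you ruled out too hastily: the full codeword $X_1$ (in the paper's labeling) is handed to an \emph{auxiliary} receiver, merely to establish that $W_2$ is decodable from the aligned residual together with $X_1$; full-rate genie information is harmless there because that receiver's capacity is never used as the bound --- only the bounding receiver's observation must stay scalar. The paper's Theorem~\ref{thm:main} proof actually stops one step earlier than your plan (its receiver decodes only two messages, after which transmitters $1$ and $2$ cooperate and a known $2$-user MIMO interference channel result gives $1$ degree of freedom), while the all-three-messages scalar-MAC endgame you describe is carried out in its Section~\ref{sec:macouterbound}; either endgame is fine once the decoding chain is repaired as above.
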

\begin{proof}

\begin{figure*}[!tbp]
\begin{center}\setlength{\unitlength}{0.00050000in}
\begingroup\makeatletter\ifx\SetFigFont\undefined%
\gdef\SetFigFont#1#2#3#4#5{%
  \reset@font\fontsize{#1}{#2pt}%
  \fontfamily{#3}\fontseries{#4}\fontshape{#5}%
  \selectfont}%
\fi\endgroup%
{\renewcommand{\dashlinestretch}{30}
\begin{picture}(10125,4333)(0,-10)
\put(900,3433){\ellipse{150}{150}}
\put(900,2233){\ellipse{150}{150}}
\put(900,1033){\ellipse{150}{150}}
\put(3000,3433){\ellipse{150}{150}}
\put(3000,2233){\ellipse{150}{150}}
\put(3000,1033){\ellipse{150}{150}}
\put(7350,3433){\ellipse{150}{150}}
\put(7350,2233){\ellipse{150}{150}}
\put(7350,1033){\ellipse{150}{150}}
\put(9450,3433){\ellipse{150}{150}}
\put(9450,1033){\ellipse{150}{150}}
\path(7421,3429)(9446,1104)
\path(9344.564,1174.786)(9446.000,1104.000)(9389.809,1214.193)
\path(7425,3433)(9375,3433)
\path(9255.000,3403.000)(9375.000,3433.000)(9255.000,3463.000)
\path(7425,2233)(9375,3358)
\path(9286.049,3272.048)(9375.000,3358.000)(9256.066,3324.019)
\path(7425,1033)(9450,3358)
\path(9393.809,3247.807)(9450.000,3358.000)(9348.564,3287.214)
\path(7425,2233)(9375,1108)
\path(9256.066,1141.981)(9375.000,1108.000)(9286.049,1193.952)
\path(9525,3433)(9900,3433)
\path(9780.000,3403.000)(9900.000,3433.000)(9780.000,3463.000)
\path(9525,1033)(9975,1033)
\path(9855.000,1003.000)(9975.000,1033.000)(9855.000,1063.000)
\path(7200,3733)(7500,3733)(7500,2008)
	(7200,2008)(7200,3733)
\path(7425,1033)(9375,1033)
\path(9255.000,1003.000)(9375.000,1033.000)(9255.000,1063.000)
\path(6900,3433)(7275,3433)
\path(7155.000,3403.000)(7275.000,3433.000)(7155.000,3463.000)
\path(6900,2233)(7275,2233)
\path(7155.000,2203.000)(7275.000,2233.000)(7155.000,2263.000)
\path(6900,1033)(7275,1033)
\path(7155.000,1003.000)(7275.000,1033.000)(7155.000,1063.000)
\put(10050,3358){\makebox(0,0)[lb]{{\SetFigFont{7}{8.4}{\rmdefault}{\mddefault}{\updefault}$\hat{W}_1,\hat{W}_2$}}}
\put(10125,958){\makebox(0,0)[lb]{{\SetFigFont{7}{8.4}{\rmdefault}{\mddefault}{\updefault}$\hat{W}_3$}}}
\put(6450,3358){\makebox(0,0)[lb]{{\SetFigFont{7}{8.4}{\rmdefault}{\mddefault}{\updefault}$W_1$}}}
\put(6450,2158){\makebox(0,0)[lb]{{\SetFigFont{7}{8.4}{\rmdefault}{\mddefault}{\updefault}$W_2$}}}
\put(6450,958){\makebox(0,0)[lb]{{\SetFigFont{7}{8.4}{\rmdefault}{\mddefault}{\updefault}$W_3$}}}
\path(975,3433)(2925,2308)
\path(2806.066,2341.981)(2925.000,2308.000)(2836.049,2393.952)
\path(975,1033)(2925,1033)
\path(2805.000,1003.000)(2925.000,1033.000)(2805.000,1063.000)
\path(975,3433)(3000,1108)
\path(2898.564,1178.786)(3000.000,1108.000)(2943.809,1218.193)
\dashline{60.000}(3000,4108)(3000,3583)
\path(2970.000,3703.000)(3000.000,3583.000)(3030.000,3703.000)
\dashline{60.000}(4050,1783)(3075,1783)(3075,2083)
\blacken\path(3105.000,1963.000)(3075.000,2083.000)(3045.000,1963.000)(3105.000,1963.000)
\path(975,3433)(2925,3433)
\path(2805.000,3403.000)(2925.000,3433.000)(2805.000,3463.000)
\path(975,2233)(2925,3358)
\path(2836.049,3272.048)(2925.000,3358.000)(2806.066,3324.019)
\path(975,1033)(3000,3358)
\path(2943.809,3247.807)(3000.000,3358.000)(2898.564,3287.214)
\path(975,2233)(2925,2233)
\path(2805.000,2203.000)(2925.000,2233.000)(2805.000,2263.000)
\path(975,2233)(2925,1108)
\path(2806.066,1141.981)(2925.000,1108.000)(2836.049,1193.952)
\path(975,1033)(2925,2158)
\path(2836.049,2072.048)(2925.000,2158.000)(2806.066,2124.019)
\path(3075,3433)(3450,3433)
\path(3330.000,3403.000)(3450.000,3433.000)(3330.000,3463.000)
\path(3075,2233)(3450,2233)
\path(3330.000,2203.000)(3450.000,2233.000)(3330.000,2263.000)
\path(3075,1033)(3450,1033)
\path(3330.000,1003.000)(3450.000,1033.000)(3330.000,1063.000)
\path(450,3433)(825,3433)
\path(705.000,3403.000)(825.000,3433.000)(705.000,3463.000)
\path(450,2233)(825,2233)
\path(705.000,2203.000)(825.000,2233.000)(705.000,2263.000)
\path(450,1033)(825,1033)
\path(705.000,1003.000)(825.000,1033.000)(705.000,1063.000)
\put(2475,4183){\makebox(0,0)[lb]{{\SetFigFont{7}{8.4}{\rmdefault}{\mddefault}{\updefault}Reduce Noise}}}
\put(4125,1708){\makebox(0,0)[lb]{{\SetFigFont{7}{8.4}{\rmdefault}{\mddefault}{\updefault}$X_1$}}}
\put(3375,1483){\makebox(0,0)[lb]{{\SetFigFont{7}{8.4}{\rmdefault}{\mddefault}{\updefault}Genie}}}
\put(3525,958){\makebox(0,0)[lb]{{\SetFigFont{7}{8.4}{\rmdefault}{\mddefault}{\updefault}$\hat{W}_3$}}}
\put(3525,2158){\makebox(0,0)[lb]{{\SetFigFont{7}{8.4}{\rmdefault}{\mddefault}{\updefault}$\hat{W}_2$}}}
\put(3525,3358){\makebox(0,0)[lb]{{\SetFigFont{7}{8.4}{\rmdefault}{\mddefault}{\updefault}$\hat{W}_1$}}}
\put(0,3358){\makebox(0,0)[lb]{{\SetFigFont{7}{8.4}{\rmdefault}{\mddefault}{\updefault}$W_1$}}}
\put(0,2158){\makebox(0,0)[lb]{{\SetFigFont{7}{8.4}{\rmdefault}{\mddefault}{\updefault}$W_2$}}}
\put(0,958){\makebox(0,0)[lb]{{\SetFigFont{7}{8.4}{\rmdefault}{\mddefault}{\updefault}$W_3$}}}
\put(1575,58){\makebox(0,0)[lb]{{\SetFigFont{8}{9.6}{\rmdefault}{\mddefault}{\updefault}(a)}}}
\put(8175,208){\makebox(0,0)[lb]{{\SetFigFont{8}{9.6}{\rmdefault}{\mddefault}{\updefault}(b)}}}
\end{picture}
}\end{center}
\caption{The converse argument of Theorem \ref{thm:main}}
\label{fig:3usrconverse}
\end{figure*}

Achievability is trivial since setting $W_2=W_3=\phi$, we get a point-to-point Gaussian channel whose capacity is known to be of the form $\log(\mbox{SNR})+o(\log(\mbox{SNR}))$. 
We show the converse for the special case where $k=1, i=2, j=3$. i.e., we consider the case where 
$$\frac{h_{2,3}}{h_{2,2}} = \frac{h_{1,3}}{h_{1,2}}  = \gamma, \gamma \neq 0. $$ 
By symmetry, the converse extends to all other cases. Consider any achievable coding scheme. Let a genie give $X_1$ to receiver $2$ (Figure \ref{fig:3usrconverse}(a)). Now, receiver $2$ can cancel the interference from transmitter $1$ to obtain $\tilde{Y}_2$ which may be written as 
\begin{eqnarray} \tilde{Y}_2 &=& h_{2,2} X_2 + h_{2,3} X_3 + Z_2 \nonumber\\
\tilde{Y}_2 &=& h_{2,2} ( X_2 +  \gamma X_3) + Z_2 \label{eqn:Rx2}\end{eqnarray}
The dependence on the symbol index $n$ is dropped above for convenience.
Note that any achievable scheme over the original channel is also achievable over this genie-aided channel and therefore, the genie does not affect the converse argument (See for example \cite{jafar_shamai:dofx}). Now, since we started  with an achievable coding scheme, receiver $1$ can decode $X_1$ reliably and therefore, cancel the effect of $X_1$ from $Y_1$ to obtain 
\begin{eqnarray} \tilde{Y}_1 &=& h_{1,2} X_2 + h_{1,3} X_3 + Z_1 \nonumber\\
 \tilde{Y}_1 &=& h_{1,2} (X_2 + \gamma X_3) + Z_1 \label{eqn:Rx1} \end{eqnarray}
 Note that receiver $2$ is able to decode $W_2$ from $\tilde{Y}_2$. Equations (\ref{eqn:Rx1}) and (\ref{eqn:Rx2}) imply that by reducing the variance of $Z_1$ sufficiently, we can ensure that $\tilde{Y}_2$ is a noisy version of $\tilde{Y}_1$. Therefore, in a channel with sufficiently reduced noise, we can ensure that receiver $1$ can decode $W_2$ as well. Note that reducing noise can only increase the capacity of a channel and therefore the converse argument is not affected. 
Thus, by reducing noise and with the aid of a genie (Figure \ref{fig:3usrconverse}(a)), we have ensured that any message which can be decoded at receiver $2$ can be decoded at receiver $1$ as well. Now, in this channel, we can let transmitters $1$ and $2$ co-operate to form a MIMO two user interference channel as in Figure \ref{fig:3usrconverse}(b). Again, note that allowing transmitters to co-operate cannot reduce capacity. Thus, the  MIMO interference channel of Figure \ref{fig:3usrconverse}(b) has a capacity region that contains the capacity region of the $3$ user interference channel of Figure \ref{fig:3user}. Reference \cite{jafar_fakhereddin:MIMOint} has shown that the MIMO interference channel of Figure \ref{fig:3usrconverse}(b) has $1$ degree of freedom meaning that its capacity is of the form $\log(\mbox{SNR}) + o(\log(\mbox{SNR}))$.  Therefore, we have shown that 
$$ C_{\Sigma}(\mbox{SNR}) = \log(\mbox{SNR}) + o(\log(\mbox{SNR}))$$ 
and the converse argument is complete.
\end{proof}
\section{The parallel Gaussian $3$ user interference channel}
\label{sec:parallel_gaussian}
The parallel Gaussian interference channel consisting of $M$ parallel subchannels may be expressed as 
\begin{eqnarray*}\mathbf{Y}_{i}(n) &=& \displaystyle\sum_{j=1}^{3} \mathbf{H}_{i,j} \mathbf{X}_{j}(n) + \mathbf{Z}_{i}(n), i=1,2,3 \end{eqnarray*}
where, corresponding to the $n$th symbol $\mathbf{Y}_i(n), \mathbf{Z}_i(n), \mathbf{X}_j(n)$ are $M \times 1$ vectors whose $M$ entries represent the signal received at receiver $i$ over the $M$ sub-channels, the i.i.d. AWGN experienced by receiver $i$ over the $M$ carriers, and the signal transmitted by the $j$th transmitter over the $M$ carriers, respectively. $\mathbf{H}_{i,j}$ is a $M \times M$ diagonal matrix whose $m$th diagonal entry represents the channel gain between transmitter $j$ and receiver $i$ corresponding to the $m$th subchannel. All channel gains are assumed to be non-zero and known apriori to all nodes. Messages, achievable rates, power constraints, capacity and degrees of freedom are defined in the usual manner as described in the previous section.

Let $C_{\Sigma}^{[m]}(\mbox{SNR})$ denote the sum capacity of the interference channel over the $m^{th}$ carrier and $\mbox{SNR}_m$ denote the total transmit power constraint over the $m^{th}$ carrier. The main question addressed in this correspondence is the following - Can the capacity (per carrier) of the parallel interference channel be expressed as the sum of the capacities achieved by the constituent interference channels over each carrier, i.e.,
\begin{eqnarray} 
C_{\Sigma}(\mbox{SNR}) &=& \frac{1}{M}\sum_{m=1}^{M} C_{\Sigma}^{[m]}(\mbox{SNR}_m) \label{eqn:powalloc2}
\end{eqnarray}
for some power allocation vector $(\mbox{SNR}_1,\mbox{SNR}_2, \ldots \mbox{SNR}_M)$ such that 
\begin{eqnarray}
\sum_{m=1}^{M} \mbox{SNR}_{m} \leq \mbox{SNR} \label{eqn:powalloc1}. \label{eqn:powalloc22}
\end{eqnarray}

The existence of a power vector satisfying the above equations would imply that a capacity-optimal scheme is to code separately over each carrier with power $\mbox{SNR}_m$ allocated to the $m$th carrier. We will use the result of Theorem \ref{thm:main} to construct a parallel interference channel where independent coding over its subchannels is suboptimal. Specifically, we construct a multi-carrier interference channel where,
\begin{enumerate}
\item Interference alignment achieves 3/2 degrees of freedom so that the capacity of the channel is $3/2 \log(\mbox{SNR})) + o(\log(\mbox{SNR}))$ per carrier.
\item Each subchannel has only $1$ degree of freedom meaning that separate encoding over each carrier is suboptimal since it can only achieve a capacity of $\log(\mbox{SNR})+o(\log(\mbox{SNR}))$ per carrier.
\end{enumerate}
This is easily done as follows. Consider the case where we have $2$ carriers, so $M=2$. Let
\begin{eqnarray}
\mathbf{H}_{i,j} &=& \left[\begin{array}{cc} 1 & 0 \\ 0 & 1\end{array}\right], \forall i \neq j, i,j \in \{1,2,3\} \label{eqn:parallel1}\\
\mathbf{H}_{2,2} &=& \mathbf{H}_{1,1} = \left[\begin{array}{cc}1 & 0 \\ 0 & -1\end{array}\right]\\
\mathbf{H}_{3,3} &=& \left[\begin{array}{cc}-1 & 0 \\ 0 & 1\end{array}\right] \label{eqn:parallel2}
\end{eqnarray}
It can be easily verified that each subchannel in this above channel satisfies the conditions of Theorem \ref{thm:main} so that each subchannel has $1$ degree of freedom. Furthermore, it can also be verified that by beamforming messages along vector $[ 1~~1]^{T}$ at each user ensures that at all receivers, interference aligns along $[1~~1]^T$ . The desired messages can be decoded along the zero-forcing vector $[1 -1]^T$ at each receiver and thus $3/2$ degrees of freedom are achievable over this network. 
\begin{figure}[!tbp]
\begin{center}\resizebox{3.5in}{2.7in}{\includegraphics*{}}\end{center}
\caption{A comparison of the performance of joint coding versus separate coding on the parallel $3$ user interference channel}
\label{fig:jtvssep}
\end{figure}
We now state this result formally in a theorem.
\begin{theorem}
Parallel Gaussian interference channels are in general, not separable. Equivalently, in general there do not exist coding schemes such that
\begin{eqnarray*} 
C_{\Sigma}(\mbox{SNR}) &=& \frac{1}{M}\sum_{m=1}^{M} C_{\Sigma}^{[m]}(\mbox{SNR}_m) \\
\sum_{m=1}^{M} \mbox{SNR}_{m} &\leq& \mbox{SNR}  
\end{eqnarray*}
\end{theorem}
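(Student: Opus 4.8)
The plan is to prove the statement by exhibiting the explicit two-carrier ($M=2$) counterexample constructed in equations (\ref{eqn:parallel1})--(\ref{eqn:parallel2}) and showing that, for this channel, joint coding across the two carriers strictly outperforms \emph{every} separate-coding scheme in the degrees-of-freedom sense. If the two sides of the claimed separability identity already differ in their leading $\log(\mbox{SNR})$ term, then no power allocation $(\mbox{SNR}_1,\mbox{SNR}_2)$ obeying $\mbox{SNR}_1+\mbox{SNR}_2\le\mbox{SNR}$ can make the identity hold, which is precisely the assertion of the theorem. Since the theorem only claims inseparability \emph{in general}, a single such counterexample suffices.

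First I would pin down the separate-coding side. By direct inspection of the diagonal matrices $\mathbf{H}_{i,j}$, each of the two subchannels is a single-carrier $3$ user interference channel satisfying the ratio hypothesis $h_{i,j}/h_{i,i}=h_{k,j}/h_{k,i}$ of Theorem~\ref{thm:main}, so each has exactly one degree of freedom: $C_\Sigma^{[m]}(\mbox{SNR}_m)=\log(\mbox{SNR}_m)+o(\log(\mbox{SNR}_m))$ for $m=1,2$. I would then bound the separate-coding expression \emph{uniformly} over all admissible power allocations,
\begin{eqnarray*}
\frac{1}{M}\sum_{m=1}^{M} C_\Sigma^{[m]}(\mbox{SNR}_m)
&=& \frac{1}{M}\sum_{m=1}^{M}\left[\log(\mbox{SNR}_m)+o(\log(\mbox{SNR}_m))\right]\\
&\le& \log(\mbox{SNR})+o(\log(\mbox{SNR})),
\end{eqnarray*}
where the inequality applies $\mbox{SNR}_m\le\mbox{SNR}$ inside each of the $M$ terms. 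The key point is that this ceiling of one degree of freedom per carrier holds for \emph{every} feasible power vector, not merely a uniform split.

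Next I would invoke the joint-coding side, which is already essentially established: beamforming each message along $[1~~1]^T$ and decoding along $[1~-1]^T$ aligns the two interfering streams at each receiver into the single dimension $[1~~1]^T$ while leaving the desired stream in the complementary dimension, so all three users are simultaneously and interference-freely decodable. Hence $C_\Sigma(\mbox{SNR})\ge\frac{3}{2}\log(\mbox{SNR})+o(\log(\mbox{SNR}))$ per carrier. Combining the two sides, were the separability identity to hold for some power allocation we would obtain $\frac{3}{2}\log(\mbox{SNR})\le\log(\mbox{SNR})$ as $\mbox{SNR}\to\infty$, a contradiction; therefore the identity fails and the parallel channel is inseparable.

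The main obstacle, and the only step requiring real care, is the uniform bound on the separate-coding side, because the theorem quantifies over \emph{all} allocations satisfying the power constraint $\sum_m\mbox{SNR}_m\le\mbox{SNR}$ rather than a single convenient one. I would handle this by observing that the per-subchannel degrees of freedom are capped at one by Theorem~\ref{thm:main} regardless of the individual power $\mbox{SNR}_m$, so that no clever concentration of power can push $\frac{1}{M}\sum_m\log(\mbox{SNR}_m)$ past one degree of freedom per carrier. It is worth emphasizing that only \emph{achievability} of $3/2$ for the joint channel and the \emph{upper bound} of $1$ for separate coding are needed for the strict gap; the exact converse for the joint channel plays no role in establishing inseparability.
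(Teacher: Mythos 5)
Your proposal is correct and takes essentially the same route as the paper: the identical two-carrier counterexample of equations (\ref{eqn:parallel1})--(\ref{eqn:parallel2}), with Theorem \ref{thm:main} capping every subchannel at one degree of freedom and the $[1~~1]^T$ beamforming / $[1~-1]^T$ zero-forcing alignment scheme achieving $3/2$ degrees of freedom per carrier under joint encoding, yielding the contradiction. Your explicit uniform bound over all admissible power allocations $(\mbox{SNR}_1,\ldots,\mbox{SNR}_M)$ simply makes rigorous a step the paper leaves implicit, via monotonicity of each $C_{\Sigma}^{[m]}(\cdot)$ in its power argument.
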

The above theorem clearly implies the sub-optimality of separate coding over each carrier of the $3$ user interference channel in general.

Figure \ref{fig:jtvssep} illustrates the suboptimality of separate coding over each carrier in comparison with the interference alignment based joint coding scheme for the channel described in equations (\ref{eqn:parallel1})-(\ref{eqn:parallel2}). The outerbound for the rate achievable by separate encoding (plotted in Figure \ref{fig:jtvssep}) is derived later in Section \ref{sec:macouterbound} (See example $1$ in the referred section). Note that in Figure \ref{fig:jtvssep} the separate encoding outerbound is \emph{not} limited to schemes that treat interference as noise. The outerbound is for the sum of the Shannon capacities of the interference channels over each carrier, and thus allows arbitrary encoding/decoding schemes, possibly including multi-user detection, with the only restriction that independent data is sent through separate codebooks over separate carriers. Thus, separate encoding schemes that treat interference as noise may perform much worse than the separate encoding outerbound. The \emph{joint encoding} achievable rate in Figure \ref{fig:jtvssep} on the other hand, is based on treating interference as noise and is only an innerbound on the rates achievable with joint encoding. Thus, even the simple joint encoding scheme which uses Gaussian codebooks (not known to be optimal) and treats interference as noise is able to achieve higher rates than could be achieved with the best separate encoding schemes. Further, while our counterexample is based on a degrees of freedom argument which is meaningful only at high SNR, the plot in Figure \ref{fig:jtvssep} shows that the capacity with joint encoding can be substantially higher than with separate encoding even at moderate to low SNRs. Lastly, note that no such example can be constructed for the parallel Gaussian point to point, multiple access and broadcast channels because in all those cases separate coding over each carrier is capacity-optimal for \emph{any} channel realization.

An interesting interpretation of the counterexample presented above is the following. Consider a game that is played between two players. The players will pick the channel coefficient values for a (single-carrier) $3$ user interference channel. Player $1$ intends to maximize the number of degrees of freedom of the channel. Player $2$ wants to minimize the number of degrees of freedom of the channel. In this game, player $1$ moves first and player $2$ moves second. During his turn, player $1$ is allowed to select the values of all the channel coefficients. Player $2$ can only change the value of $1$ channel coefficient after the values have been chosen by player $1$. Which channel coefficient player $2$ is allowed to change is also decided by player $1$. There is a constraint that all channel co-efficients (diagonal terms of the channel matrix) must be non-zero. First, consider the constant interference channel. Note that \cite{cadambe_jafar_shamai:intconst} has already shown that there exist $3$ user channels with close to $3/2$ degrees of freedom. Therefore, in absence of player $2$, player $1$ can design a channel that will achieve close to $3/2$ degrees of freedom. However, if player $2$ can control any one of the channel co-efficients, he can use the result of Theorem \ref{thm:main} to win the game by reducing the number of degrees of freedom to unity. For example, if player $2$ has control of $h_{1,2}$, he can choose the channel co-efficient to be equal to $\frac{h_{1,3}h_{2,2}}{h_{2,3}}$ to ensure that the channel has only $1$ degree of freedom. Thus, in a constant single-carrier channel, player $2$ wins the game.

Now, suppose the channel coefficients vary with time, i.e., we have a parallel Gaussian channel. At each time instant the players take turns to design the channel coefficients according to the rules described above. Corresponding to each sub-channel, player $2$ has control of one of the channel co-efficients. In this case, player $2$ can kill the degrees of freedom of the individual subchannels by using Theorem \ref{thm:main}. However, player $1$ still wins the game since $3/2$ degrees of freedom are achievable through the interference alignment scheme of \cite{cadambe_jafar:Kuserint} which codes across all parallel channels. Thus, in the time-varying case, player $1$ wins the game.

Note that, it is important that user $2$ has control of \emph{different} channel co-efficients over \emph{different} sub-channels. If user $2$ controls the same channel co-efficient of all the subchannels, it can, in fact, use Theorem \ref{thm:main} to kill the degrees of freedom of the channel. For example, consider the case where if user $2$ has control of all the entries of $\mathbf{H}_{1,2}$ over all subchannels, then it can choose $\mathbf{H}_{1,2} = \mathbf{H}_{1,3} \mathbf{H}_{2,2} (\mathbf{H}_{2,3})^{-1}$. 

\section{Multiple access outerbounds for the capacity of $3$ user interference channel}
\label{sec:macouterbound}
\begin{figure}[!tbp]
\begin{center}\setlength{\unitlength}{0.00041667in}
\begingroup\makeatletter\ifx\SetFigFont\undefined%
\gdef\SetFigFont#1#2#3#4#5{%
  \reset@font\fontsize{#1}{#2pt}%
  \fontfamily{#3}\fontseries{#4}\fontshape{#5}%
  \selectfont}%
\fi\endgroup%
{\renewcommand{\dashlinestretch}{30}
\begin{picture}(4796,3529)(0,-10)
\put(1575,2533){\ellipse{150}{150}}
\put(1575,1333){\ellipse{150}{150}}
\put(1575,133){\ellipse{150}{150}}
\put(3675,2533){\ellipse{150}{150}}
\put(3675,1333){\ellipse{150}{150}}
\put(3675,133){\ellipse{150}{150}}
\path(1650,2533)(3600,1408)
\path(3481.066,1441.981)(3600.000,1408.000)(3511.049,1493.952)
\path(1650,1333)(3600,1333)
\path(3480.000,1303.000)(3600.000,1333.000)(3480.000,1363.000)
\path(1650,1333)(3600,208)
\path(3481.066,241.981)(3600.000,208.000)(3511.049,293.952)
\path(1650,133)(3600,133)
\path(3480.000,103.000)(3600.000,133.000)(3480.000,163.000)
\path(1650,133)(3600,1258)
\path(3511.049,1172.048)(3600.000,1258.000)(3481.066,1224.019)
\path(1650,2533)(3600,2533)
\path(3480.000,2503.000)(3600.000,2533.000)(3480.000,2563.000)
\path(1650,1333)(3600,2458)
\path(3511.049,2372.048)(3600.000,2458.000)(3481.066,2424.019)
\path(1650,133)(3675,2458)
\path(3618.809,2347.807)(3675.000,2458.000)(3573.564,2387.214)
\path(1650,2533)(3675,208)
\path(3573.564,278.786)(3675.000,208.000)(3618.809,318.193)
\path(525,2533)(1050,2533)
\path(930.000,2503.000)(1050.000,2533.000)(930.000,2563.000)
\path(525,1333)(1050,1333)
\path(930.000,1303.000)(1050.000,1333.000)(930.000,1363.000)
\path(450,133)(975,133)
\path(855.000,103.000)(975.000,133.000)(855.000,163.000)
\path(4125,2533)(4650,2533)
\path(4530.000,2503.000)(4650.000,2533.000)(4530.000,2563.000)
\path(4125,1333)(4650,1333)
\path(4530.000,1303.000)(4650.000,1333.000)(4530.000,1363.000)
\path(4125,133)(4650,133)
\path(4530.000,103.000)(4650.000,133.000)(4530.000,163.000)
\dashline{60.000}(3675,3208)(3675,2683)
\path(3645.000,2803.000)(3675.000,2683.000)(3705.000,2803.000)
\dashline{60.000}(3975,2083)(3975,2383)
\path(4005.000,2263.000)(3975.000,2383.000)(3945.000,2263.000)
\put(3825,1258){\makebox(0,0)[lb]{{\SetFigFont{6}{7.2}{\rmdefault}{\mddefault}{\updefault}$Y_2$}}}
\put(3825,58){\makebox(0,0)[lb]{{\SetFigFont{6}{7.2}{\rmdefault}{\mddefault}{\updefault}$Y_3$}}}
\put(1125,2458){\makebox(0,0)[lb]{{\SetFigFont{6}{7.2}{\rmdefault}{\mddefault}{\updefault}$X_1$}}}
\put(1125,1258){\makebox(0,0)[lb]{{\SetFigFont{6}{7.2}{\rmdefault}{\mddefault}{\updefault}$X_2$}}}
\put(1125,58){\makebox(0,0)[lb]{{\SetFigFont{6}{7.2}{\rmdefault}{\mddefault}{\updefault}$X_3$}}}
\put(3825,2458){\makebox(0,0)[lb]{{\SetFigFont{6}{7.2}{\rmdefault}{\mddefault}{\updefault}$Y_1$}}}
\put(0,2458){\makebox(0,0)[lb]{{\SetFigFont{6}{7.2}{\rmdefault}{\mddefault}{\updefault}$W_1$}}}
\put(0,1258){\makebox(0,0)[lb]{{\SetFigFont{6}{7.2}{\rmdefault}{\mddefault}{\updefault}$W_2$}}}
\put(0,58){\makebox(0,0)[lb]{{\SetFigFont{6}{7.2}{\rmdefault}{\mddefault}{\updefault}$W_3$}}}
\put(4725,2458){\makebox(0,0)[lb]{{\SetFigFont{6}{7.2}{\rmdefault}{\mddefault}{\updefault}$\hat{W}_1,\hat{W}_2,\hat{W}_3$}}}
\put(4725,1258){\makebox(0,0)[lb]{{\SetFigFont{6}{7.2}{\rmdefault}{\mddefault}{\updefault}$\hat{W}_2$}}}
\put(4725,58){\makebox(0,0)[lb]{{\SetFigFont{6}{7.2}{\rmdefault}{\mddefault}{\updefault}$\hat{W}_3$}}}
\put(3750,2908){\makebox(0,0)[lb]{{\SetFigFont{6}{7.2}{\rmdefault}{\mddefault}{\updefault}Genie}}}
\put(2250,3358){\makebox(0,0)[lb]{{\SetFigFont{6}{7.2}{\rmdefault}{\mddefault}{\updefault}$S_1=a_1X_1+a_2X_2+a_3X_3+\tilde{Z}_1$}}}
\put(3750,1858){\makebox(0,0)[lb]{{\SetFigFont{6}{7.2}{\rmdefault}{\mddefault}{\updefault}Reduce noise}}}
\end{picture}
}\end{center}
\caption{Multiple access outerbound for the classical $3$ user interference channel}
\label{fig:macouterbound}
\end{figure}
In this section, we provide an interesting application of the result of Theorem \ref{thm:main} in the form of a class of outerbounds for the classical (single-carrier) $3$ user interference  channel. The outerbound argument goes as follows. Consider any achievable coding scheme. Using this coding scheme, receiver $1$ can decode $W_1$. Our aim is to enhance receiver $1$ with enough information so that it can decode $W_2$ and $W_3$ as well (see Figure \ref{fig:macouterbound}). Then the capacity region of the multiple access channel(MAC) formed by the three transmitters and the (enhanced) receiver $1$ forms an outerbound for the capacity region of the interference channel. The improvements to receiver $1$ are described in the following steps
\begin{enumerate}
\item \emph{To help receiver $1$ decode $W_2$ :} Let a genie provide receiver $1$ with a $S_1 = a_1 X_1+a_2 X_2+a_3 X_3+\tilde{Z}_1$ where $\tilde{Z}_1$ is an AWGN term independent of $X_i,i=1,2,3$. Note that this side information effectively acts as an additional antenna at receiver $1$. The noise term $\tilde{Z}_1$ can possibly be correlated with other noise variables $Z_i,i=1,2,3$. Now, receiver $1$ can linearly combine its received signal with its side information to form $U_1 = \alpha Y_1 + \beta S_1$ to form another (noisy) linear combination of the codewords $X_i,i=1,2,3$. $\alpha$ and $\beta$ can be chosen such that the co-efficients of $X_1$ and $X_2$ in $U_1$ satisfy the conditions of Theorem \ref{thm:main}. Note that if these channel co-efficients already satisfy the condition of \ref{thm:main}, then side information of $S_1$ is not needed. Now, the proof of Theorem \ref{thm:main} implies that by sufficiently reducing the noise at receiver $1$, we can ensure that receiver $1$ decodes $W_2$ as well. Thus, with the aid of a genie and possibly reducing the noise, we have ensured that receiver $1$ can decode $W_2$. Note that neither the genie information, nor the reduction of noise reduce the capacity of this channel and therefore do not affect the outerbound argument.
\item \emph{To help receiver $1$ decode $W_3$ :} Receiver $1$, enhanced as described in the previous step, can now decode $W_1$ and $W_2$. We can now choose $\bar{\alpha}, \bar{\beta}, \bar{\gamma}$ such that \begin{eqnarray*}V_1&=&\bar{\alpha} X_1 + \bar{\beta} X_2 + \bar{\gamma} Y_1^{'} \\ &=& h_{3,1}X_1 + h_{3,2}X_2 + h_{3,3} X_3+\bar{\gamma}Z_1^{'}.\end{eqnarray*} Note that receiver $1$ can form $V_1$.  We use $Y_1^{'}$ and $Z_1^{'}$ above rather than $Y_1$ and $Z_1$ since the previous step involves reducing the noise at receiver $1$. Statistically, $V_1$ differs from $Y_3$ only in the variance of the noise term. Therefore, by further reducing the noise if required, receiver $1$ can also decode $W_3$. As in the previous step, it is important to note that the reduction of noise does not affect the outerbound argument
\end{enumerate}
Steps $1$ and $2$ above imply that the capacity region of the $3$ user Gaussian interference channel is outer-bounded by the capacity region of the single-input-multiple-output (SIMO) Gaussian MAC which receiver $S_1$ on one antenna and a reduced-noise version of $Y_1$ on the other. This class of bounds can be optimized over $a_i,i=1,2,3$ and the statistics of $\tilde{Z}_1$. Further, similar outerbounds can be found by enhancing receiver $2$ or receiver $3$  rather than receiver $1$. Note that, since a MAC with two antennas has $2$ degrees of freedom, this class of outerbounds is loose from the perspective of degrees of freedom. Using the Carliel's outerbounds on each of the two user channels contained within the $3$ user interference channel obtains a degrees of freedom outerbound of $3/2$ (See \cite{host-madsen_nosratinia:dofint,cadambe_jafar:Kuserint}). We now provide $2$ examples of this class of outerbounds.

\emph{Example 1:}
Here, we consider the interference channel formed on the first carrier of the parallel Gaussian interference channel described in Equations (\ref{eqn:parallel1})-(\ref{eqn:parallel2}) in the previous section.
In this channel, $h_{i,j} = 1, i \neq j, i,j \in \{1,2,3\}$. Also $h_{11}=h_{22}=1, h_{33} = -1$. With AWGN power at each receiver normalized to unity, the total transmit power at all the transmitters is defined as $\mbox{SNR}$. Consider any achievable coding scheme. Note that this channel already satisfies the conditions of Theorem \ref{thm:main}. Therefore, we do not need the aid of a genie. In fact, both receiver $1$ and receiver $2$ receive signals of the form $X_1+X_2+X_3+Z$ where $Z$ is an AWGN term of unit variance. Therefore, any message that can be decoded at receiver $2$ can also be decoded at receiver $1$ (and vice-versa). Receiver $1$ can hence decode $W_2$. Furthermore, receiver $1$ can compute $X_1+X_2-Y_1 = h_{3,1}X_1+h_{3,2}X_2+h_{3,3}X_3 - Z_1$. Since $(-Z_1)$ is a AWGN term having the variance as $Z_3$, receiver $1$ can decode $W_3$ without requiring any noise reduction. Thus, the capacity region of this channel is bounded by the capacity region of the multiple access channel formed at receiver $1$. The sum-capacity of this interference channel is therefore bounded by
$$ C_\Sigma \leq 1/2 \log\left(1+\mbox{SNR}\right)  $$
It can be easily verified that the sum-capacity of the interference channel corresponding to the second carrier of the parallel channel described by Equations (\ref{eqn:parallel1})-(\ref{eqn:parallel2}) can also be bounded as above.

\emph{Example 2:}
Consider the \emph{perfectly symmetric} $3$ user interference channel where $h_{i,i} = 1 \forall i=1,2,3$ and $h_{i,j} = h > 1, \forall i \neq j, i,j \in \{1,2,3\}$. Also, let the total transmit power be equal to $\mbox{SNR}$. Since the channel does not satisfy the conditions of Theorem \ref{thm:main}, a genie provides receiver $1$ with information of $S_1 = a_1 X_1+(1-h)X_2+X_3+\tilde{Z}_1$ where $\tilde{Z}_1$ is an i.i.d  AWGN term correlated with $Z_1$ such that $E\left[(Z_1+\tilde{Z}_1)^2)\right] = 1$. Note that since we started with an achievable coding scheme, receiver $1$ can decode $W_1$ using information from $Y_1$. Receiver $1$ can subtract the effect of $X_1$ from $S_1$ and $Y_1$ and to obtain 
$\tilde{S}_1 = (1-h)X_2 + \tilde{Z}_1$ and $\tilde{Y}_1 = h X_2 + hX_3 + Z_1$. Now receiver $1$ can now decode $X_2$ from $U_1hX_1+Y_1+S_1$ since it is of the form $hX_1+X_2+h X_3 + Z_2^{'}$ where $Z_2^{'}$ is a AWGN term with unit variance. Now that receiver $1$ is aware of $X_1$ and $X_2$, it can add appropriate terms to $Y_1$ to form $V_1 = h(h X_1 + hX_2 + X_3) + Z_1$. Since $h > 1$, $Y_3$ is a degraded version of $V_1$ which implies that receiver $1$ can decode $W_3$ as well.
Thus, all rates achievable in this interference channel, are achievable in the single-input-multiple-output (SIMO) multiple access channel with $3$ single antenna nodes respectively transmitting $X_1,X_2,X_3$ and a two-antenna node receiving $Y_1$ along the first antenna and $S_1$ along the second. Thus, the capacity region of this multiple access channel is an outer-bound for the capacity of the interference channel. Furthermore, parameters $a_1$ and $\tilde{Z}_1$ are parameters which can be used for optimization. So, for example, we can bound the sum-capacity $C_{\Sigma}(\mbox{SNR})$ of the $3$ user interference channel by 
$$ C_\Sigma(\mbox{SNR}) \leq \frac{1}{2}\min_{\begin{array}{c}(a_1, \tilde{Z}_1) \\E\left[(Z_1+\tilde{Z}_1)^2\right] \leq 1 \\ \tilde{Z}_1 \thicksim \mathcal{N}(0,\sigma^2)\end{array}} C_{\textrm{MAC}}(\rho,a_1,\tilde{Z}_1)$$
where
$$ C_{\textrm{MAC}}(\mbox{SNR},a_1,\tilde{Z}_1) = \log\left(\frac{|\mathbf{K_z}+\mathbf{\frac{\mbox{SNR}}{3} \mathbf{H}\mathbf{H}^\dagger}|}{|\mathbf{K_z}|}\right)$$
$|\mathbf{A}|$ indicates the detereminant of matrix $\mathbf{A}$, $\mathbf{K_z}$ indicates the covariance matrix corresponding to noise vector $[Z_1~~\tilde{Z}_1]^T$ and
$$\mathbf{H} = \left[ \begin{array}{ccc} 1& h& h \\ a_1 & (1-h) & 0 \end{array}\right]$$

\section{Conclusion}
We constructed a $3$ user interference channel with constant (i.e., not frequency-selective or time-varying) channel coefficients such that it has $1$ degree of freedom. Furthermore, we provided a mult-carrier extension of this channel such that separate coding over each carrier can only achieve sum rate $\log(\mbox{SNR})+o(\log(\mbox{SNR}))$ per carrier, while the actual capacity is  $3/2\log(\mbox{SNR})+o(\log(\mbox{SNR}))$ which can be achieved only through coding across carriers.  The result implies that, in general, independent coding over the various channel states of the parallel Gaussian interference channel is not capacity optimal. Thus, unlike parallel Gaussian point to point, multiple access and broadcast channels, parallel Gaussian interference channels are, in general, not separable.  The key is that even though interference alignment may not be possible over each carrier, it may still be accomplished by coding across carriers. 

An interesting question that remains open is the separability of parallel Gaussian interference channels for two users. The counterexample provided in this work applies to the $3$ user scenario and by simple extension to $K\geq 3$ users. However, since our examples rely on interference alignment which is only known to be relevant for interference channels with $3$ or more users, we have not shown that the $2$ user parallel Gaussian interference channel is inseparable. It is interesting to note that the $2$ user interference channel is separable under strong interference \cite{taek_chung:thesis}. 

The inseparability of the interference channel may have interesting implications, especially for the existence of single-letter capacity characterizations for interference channels. From a practical perspective, it prompts a closer look at the performance of separate encoding versus joint coding schemes in parallel Gaussian interference channels.

\bibliographystyle{ieeetr}
\bibliography{refs}

\begin{thebibliography}{10}

\bibitem{dtse:mac}
D.~Tse and S.~Hanly, ``Multiaccess fading channels. i. polymatroid structure,
  optimal resource allocation and throughput capacities,'' {\em IEEE
  Transactions on Information Theory}, vol.~Vol.44, no.~7, 1998.

\bibitem{tse:parallelbc}
D.~Tse, ``Optimal power allocation over parallel gaussian broadcast channels,''
  {\em IEEE International Symposium on Information Theory}, 29 Jun-4 Jul 1997.

\bibitem{chung_cioffi:parallel_strongint}
S.~T. Chung and J.~Cioffi, ``The capacity region of frequency-selective
  gaussian interference channels under strong interference,'' {\em IEEE
  Transactions on Communications}, vol.~55, no.~9, pp.~1812--1821, Sept. 2007.

\bibitem{scutari_palomar_etal:wfillint}
G.~Scutari, D.~Palomar, and S.~Barbarossa, ``Asynchronous iterative
  water-filling for gaussian frequency-selective interference channels: A
  unified framework,'' {\em IEEE 7th Workshop on Signal Processing Advances in
  Wireless Communications}, pp.~1--5, 2-5 July 2006.

\bibitem{leung_etal:wfillint}
K.-K. Leung, C.~W. Sung, and K.~Shum, ``Iterative waterfilling for parallel
  gaussian interference channels,'' {\em IEEE International Conference on
  Communications}, vol.~12, June 2006.

\bibitem{pang_scutari_etal:parallel_int}
J.-S. Pang, G.~Scutari, F.~Facchinei, and C.~Wang, ``Distributed power
  allocation with rate constraints in gaussian frequency-selective interference
  channels,'' {\em arxiv.org}, 2007.
\newblock http://arxiv.org/abs/cs/0702162.

\bibitem{yu_cioffi:parallel_int}
W.~Yu and J.~Cioffi, ``Competitive equilibrium in the gaussian interference
  channel,'' {\em IEEE International Symposium on Information Theory}, 2000.

\bibitem{yu_ginis_cioffi:parallel_int}
W.~Yu, G.~Ginis, and J.~Cioffi, ``Distributed multiuser power control for
  digital subscriber lines,'' {\em IEEE Journal on Selected Areas in
  Communications}, vol.~20, no.~5, pp.~1105--1115, Jun 2002.

\bibitem{Cendrillon_etal}
R.~Cendrillon, W.~Yu, M.~Moonen, J.~Verlinden, and T.~Bostoen, ``Optimal
  multiuser spectrum balancing for digital subscriber lines,'' {\em IEEE Trans.
  on Communications}, vol.~54, no.~5, pp.~922--933, May 2006.

\bibitem{Yu_Lui}
W.~Yu and R.~Lui, ``Dual methods for nonconvex spectrum optimization of
  multicarrier systems,'' {\em IEEE Trans. on Communications}, vol.~54, no.~7,
  pp.~1310--1322, July 2006.

\bibitem{taek_chung:thesis}
S.~T. Chung, ``Transmission schemes for frequency selective gaussian
  interference channels,'' 2004.
\newblock Ph.D Thesis.

\bibitem{cadambe_jafar:Kuserint}
V.~R. Cadambe and S.~A. Jafar, ``Interference alignment and degrees of freedom
  region for the k user interference channel,'' {\em arxiv.org}, 2007.
\newblock arxiv eprint = cs/0707.0323.

\bibitem{tse_zheng:divmultiplexing}
L.~Zheng and D.~N. Tse, ``Packing spheres in the grassmann manifold: A
  geometric approach to the non-coherent multi-antenna channel,'' {\em IEEE
  Trans. Inform. Theory}, vol.~48, Feb 2002.

\bibitem{jafar_shamai:dofx}
S.~A. Jafar and S.~Shamai, ``Degrees of freedom region for the mimo x
  channel,'' {\em IEEE Transactions on Information Theory}, vol.~54,
  pp.~151--170, Jan 2008.

\bibitem{cadambe_jafar_shamai:intconst}
V.~R. Cadambe, S.~A. Jafar, and S.~S. (Shitz), ``Interference alignment on the
  deterministic channel and application to fully connected awgn interference
  networks,'' November 2007.
\newblock arxiv eprint = 0711.2547.

\bibitem{carliel:interference}
A.~B. Carliel, ``Outer bounds on the capacity of interference channels,'' {\em
  IEEE Trans. Information Theory}, vol.~50, pp.~581--586, March 2004.

\bibitem{jafar_fakhereddin:MIMOint}
S.~A. Jafar and M.~Fakhereddin, ``Degrees of freedom of the mimo interference
  channel,'' {\em IEEE Transactions on Information Theory}, vol.~7,
  pp.~2637--2642, July 2007.

\bibitem{host-madsen_nosratinia:dofint}
A.~Host-Madsen A.;~Nosratinia, ``The multiplexing gain of wireless networks,''
  {\em International Symposium on Information Theory, 2005}, pp.~2065--2069,
  4-9 Sept. 2005.

\end{thebibliography}
\end{document}